%
%

\documentclass[]{interact}

\usepackage{epstopdf}
\usepackage[caption=false]{subfig}

\usepackage[longnamesfirst,sort]{natbib}
\bibpunct[, ]{(}{)}{;}{a}{,}{,}


\theoremstyle{plain}
\newtheorem{theorem}{Theorem}[section]
\newtheorem{lemma}[theorem]{Lemma}

\theoremstyle{definition}

\theoremstyle{remark}


\begin{document}
\articletype{MANUSCRIPT}

\title{Regression estimator for the tail index}

\author{
	\name{L\'aszl\'o N\'emeth\textsuperscript{a}\thanks{CONTACT L. N\'emeth. Email: lnemeth@caesar.elte.hu, zempleni@ludens.elte.hu} and Andr\'as Zempl\'eni\textsuperscript{b}}
	\affil{\textsuperscript{a,b}E\"otv\"os Lor\'and University,  Institute of Mathematics, Department of Probability and Statistics, Budapest, Hungary.}
}

\maketitle

\noindent\makebox[\linewidth]{\rule{\paperwidth}{0.4pt}}

\begin{abstract}

Estimating the tail index parameter is one of the primal objectives in extreme value theory. For heavy-tailed distributions the Hill estimator is the most popular way to estimate the tail index parameter. Improving the Hill estimator was aimed by recent works with different methods, for example by using bootstrap, or Kolmogorov-Smirnov metric. These methods are asymptotically consistent, but for  tail index $\xi >1$ and smaller sample sizes the estimation fails to approach the theoretical value
for realistic sample sizes. In this paper, we introduce a new empirical method, which can estimate 
high tail index parameters well
and might also be useful for relatively small sample sizes.
\\[1mm]
	{\em MSC code: 62G32, 62F40, 60G70.}  
\end{abstract}

\begin{keywords}
	Tail index; bootstrap; Hill estimation; Kolmogorov-Smirnov distance
\end{keywords}

\noindent\makebox[\linewidth]{\rule{\paperwidth}{0.4pt}}


\section{Introduction}

In probability theory and statistics there are many applications, where 
it is essential to know the high quantiles of a distribution, for example solvency margin calculations for insurances or estimating the highest possible loss caused by a natural disaster within a given time period. Extreme value theory provides tools to solve these types of problems. In the 1920's \cite{ft} described the limit behaviour of the maximum of i.i.d. samples. Their theorem is the basis of every research in extreme value theory. Later another approach emerged, where the extremal model is based on the values over a high threshold. Its theoretical background was developed by \cite{bh} and \cite{pi}, the statistical applications by \cite{par}, among others -- summarized by \cite{pot}. Both approaches depend on the tail behaviour of the underlying distribution, which can be measured by the tail index. \cite{hill} constructed an estimator for the tail index, using the largest values of the ordered sample. This estimator is still popular, however finding the optimal number of sample elements to be used remains a challenge. Numerous methods were developed to find the best threshold, for example the double bootstrap method by \cite{dboot}, improved by \cite{boot} or a model based on Kolmogorov-Smirnov distance by \cite{kol}. The Hill estimator is asymptotically consistent for both threshold selection methods, however simulations show that for some sample distributions we need more than $10\,000$ observations for a reasonably accurate estimation. Our new method skips the direct threshold selection for the initial sample and calculates the tail index estimation via the tail indices of simulated subsamples. In this way it results in acceptable estimators for smaller samples, like $n \in (500, 2000)$ too.

\subsection{Mathematical overview}
Let $X_1, X_2, \dots, X_n, \dots$ be an independent and identically distributed (i.i.d) sample from a distribution function $F$, and $M_n=\max(X_1,X_2, \dots, X_n)$. If there exist sequences $a_n >0$ and $b_n \in \mathbb{R}$, such that $$\mathbb{P}\bigg(\frac{M_n-b_n}{a_n}<x\bigg)\to G(x)$$
if $n \to \infty$ for a nondegenerate distribution function $G$, then one can say that $F$ is in the maximum domain of attraction of $G$. The Fisher-Tippet theorem claims that $G$ belongs to a parametric family (with location, scale and shape parameter) called generalized extreme value distribution. For every distribution $G$ exist $a > 0$ and $b$, such that $G^{*}(x)=G(ax+b)$ for every $x$, where $G^{*}$ is the standardized extreme value distribution:

\begin{equation}\label{eq:gev}
G^{*}_{\xi}(x)=\begin{cases}
\exp\{-(1+\xi x)^{-\frac{1}{\xi}}\}, & \text{if\ } \xi \ne 0, \\
\exp\{-e^{-x}\}, & \text{if\ } \xi = 0,
\end{cases}
\end{equation}
where $1+\xi x >0$ holds. The $\xi$ parameter is called the tail index of the distribution. In case of a generalized extreme value distribution the tail index parameter is the shape parameter, which is invariant of standardizing the distribution.

Another approach to the investigation of the tail behaviour is the peaks over threshold (POT) model of \cite{bh} and \cite{pi} where the extremal model is based on the values over a threshold $u$. Let $x_F$ be the right endpoint of the distribution $F$ (finite or infinite).
If the distribution of the standardized excesses over the threshold has a limit, that must be the generalized Pareto distribution:
$$\mathbb{P}\bigg(\frac{X-u}{\sigma_u}<x|X>u\bigg) \to F_{\xi}(x),\quad \text{if\ } u \to x_F,$$

where 
\begin{equation}\label{eq:pot}
F_{\xi}(x)=\begin{cases}
1-(1+\xi x)^{-\frac{1}{\xi}}, & \text{if\ } \xi \ne 0, \\
1-e^x, & \text{if\ } \xi = 0.
\end{cases}
\end{equation}
For the given initial distribution $F$, the two model result in the same parameter $\xi$ in equations (\ref{eq:gev}) and (\ref{eq:pot}). 

 A function $\ell(x)$ is called slowly varying if $\lim_{t\to \infty}\ell(tx)/\ell(t)=1$ for all $x>0$. For tail index parameter  $\xi >0$ the previous limit theorems are true if 

$$1-F(x) = x^{-\frac{1}{\xi}}\ell(x),\quad \text{for $x>0$}.$$

Finding a proper function  $\ell(x)$ and calculating the parameters of the limiting distribution is only applicable for special known distributions. In case of real life problems finding $\ell(x)$ is unrealistic, therefore one can use estimators to approximate $\xi$.

\section{Methods for defining the threshold in Hill estimator}
For tail index $\xi >0$ \cite{hill} proposed an estimator as follows: let $X_1,X_2, \dots, X_n$ be a sample from a distribution function $F$ and $X_1^*\le X_2^*\le \dots \le X_n^*$ the ordered statistic. The Hill estimator for the tail index is
$$\hat{\xi}=\frac{1}{k}\sum_{i=1}^{k} \log X_{n-i+1}^*-\log X_{n-k}^*.$$
Similarly to the POT model, the Hill estimator also uses the largest values of the sample. The threshold is defined as the $(k+1)^{\rm{th}}$ highest observation.

The Hill estimator strongly depends on the choice for $k$. It is important to mention that $\hat{\xi}$ is a consistent estimator for the tail index only if $k \to \infty \text{ and } k/n \to 0$ as $n \to \infty$. If one uses a too small $k$, the estimator has large variance, however for too large $k$, the estimator is likely to be biased.
Therefore proposing a method for choosing the optimal $k$ for the Hill estimator has been in the focus of research by \cite{hall} and others since its invention . 

\subsection{Double bootstrap}\label{doubleboot}
One of the most accurate estimators, the double bootstrap method was introduced by \cite{dboot}, improved by \cite{boot}. In this case, we can find a proposed $k$ by minimizing the asymptotic mean square error.
Let 
$$M^{(i)}(n,k)=\frac{1}{k}\sum_{j=1}^{k} (\log X_{n-j+1}^*-\log X_{n-k}^*)^{i}.$$
One can see, that $\hat{\xi}=M^{(1)}(n,k)$ is the Hill estimation. Instead of $M^{(1)}(n,k)$, the method optimizes $k$ for the mean square error of $M(n,k)=M^{(2)}(n,k)-2(M^{(1)}(n,k))^2$. Let $k_1$ be the optimal threshold index for $M^{(1)}(n,k)$, and $k_2$ for $M(n,k)$. The 
\begin{equation}\label{eq:kk}
\frac{k_2}{k_1} \sim \bigg(1-\frac{1}{\rho}\bigg)^{1/(1-2\rho)}
\end{equation}
 statement is proven by \cite{dboot}, where $\rho$ is a regularity parameter, which can be estimated in a consistent way. This statement allows to estimate $k_2$ instead of $k_1$ by following the next steps.

\begin{itemize}
\item Choose $\epsilon \in (0,1/2),$ and set $m_1=[n^{1-\epsilon}]$ to ensure consistency if $n \to \infty$. Estimate $E(M(m_1,r)^2|X_1,X_2,\dots,X_n)$ by drawing $m_1$ size bootstrap samples from the empirical distribution function $F_n$ and minimize it in $r$. Denote the minimum by $r_1$.
\item Set $m_2=[m_1^{2}/n]$ and minimize $E(M(m_2,r)^2|X_1,X_2,\dots,X_n)$ the same way as in the first step, let the minimum be $r_2$.
\item Estimate the regularity parameter $\rho$, which is important for further calculations, by $\hat{\rho}=\log(r_1)/(-2\log(m_1)+2\log(r_1))$.
\item Now one can estimate the optimal $k$ using the approximation (\ref{eq:kk}) by $$\hat{k}=\frac{r_1^{2}}{r_2}\bigg(1-\frac{1}{\hat{\rho}}\bigg)^{1/(2\hat{\rho}-1)}.$$
\end{itemize}
 The Hill estimator based on the double bootstrap method provides appropriate tail index estimation, but usually results in a long computation time. For smaller sample sizes the acceptable range is limited to $\xi >0.5$.

\subsection{Kolmogorov-Smirnov distance metric}\label{ksdist}
Another approach is to minimize the distance between the tail of the empirical distribution function and the fitted Pareto distribution with the estimated tail index parameter. One can use the Kolmogorov-Smirnov distance for the quantiles as metric as proposed by \cite{kol}. Assume that 
$$F(x)=1-C x^{-\frac{1}{\xi}}+o(x^{-\frac{1}{\xi}}).$$
The quantile function can be approximated by $$x=\bigg(\frac{P(X>x)}{C}\bigg)^{-\xi}.$$
The probability $P(X > x)$  can be replaced by $j/n$, and $\xi$ is estimated by the Hill estimator for some $k$, moreover $C$ can be estimated by $\frac{k}{n}(X_{n-k+1})^{1/\xi}$. 
Using these substitutions one gets an estimation for the quantiles as a function of $j$ and $k$.
$$q(j,k)=\bigg(\frac{k}{j}(X_{n-k+1})^{1/\hat{\xi}}\bigg)^{\hat{\xi}}.$$ The optimal $k$ for the Hill estimator is chosen as the $k$, which minimizes the distance between the empirical and the calculated quantiles
$$\hat{k}=\arg \min_k \bigg[\sup_{j \in 1,\dots,T}|x_{n-j}-q(j,k)|\bigg],$$ where $T$ sets the fitting threshold (we call it KS threshold).
The advantages of this method are that it is easy to program and its computation time is short. As \cite{kol} mentioned, it is the best performing known method if $0< \xi <0.5$ and also works well for small sample sizes. However, for distributions with  tail index $\xi >0.5 $ this technique results in highly biased estimation.

Generally it can be said, that there are good methods for finding an appropriate $k$ if the tail index parameter is $0<\xi<0.5$, which contains distributions with finite variance. However, these methods usually fail to perform well for the case $\xi >0.5$ (these are distributions with infinite variance).

The Kolmogorov-Smirnov method was introduced and analized by simulations in \cite{kol}, but no theoretical background is available for this method. The  Theorem \ref{ksunbias} states that under some conditions the Kolmogorov-Smirnov technique results in unbiased estimation for the tail index. For the proof we need the following lemma:

\begin{lemma}
	If  $k \to \infty$, then $\frac{1}{k}\log(\prod_{j=1}^{k}\frac{k}{j}) \to 1$.
\end{lemma}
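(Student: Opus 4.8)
The key observation is that $\frac{1}{k}\log\left(\prod_{j=1}^{k}\frac{k}{j}\right) = \log k - \frac{1}{k}\sum_{j=1}^{k}\log j = \log k - \frac{1}{k}\log(k!)$, so the claim is equivalent to showing that $\frac{1}{k}\log(k!) = \log k - 1 + o(1)$ as $k \to \infty$. The plan is to invoke Stirling's formula in its crude logarithmic form, $\log(k!) = k\log k - k + O(\log k)$, which gives $\log k - \frac{1}{k}\log(k!) = \log k - \log k + 1 + O\!\left(\frac{\log k}{k}\right) \to 1$. If one prefers a self-contained argument avoiding Stirling, I would instead note that $\frac{1}{k}\sum_{j=1}^{k}\log\frac{j}{k} = \frac{1}{k}\sum_{j=1}^{k}\log\frac{j}{k}$ is a Riemann sum for $\int_0^1 \log x\,dx = -1$, and the left-hand side of the lemma is exactly $-1$ times this Riemann sum.

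The steps, in order, would be: first rewrite the product as $\frac{1}{k}\sum_{j=1}^{k}\left(\log k - \log j\right) = -\frac{1}{k}\sum_{j=1}^{k}\log\frac{j}{k}$; second, recognize this as (minus) the right-endpoint Riemann sum of $f(x) = \log x$ on the partition $\{j/k : j = 1,\dots,k\}$ of $(0,1]$; third, argue convergence of the Riemann sum to $\int_0^1 \log x\,dx$. Since $\log x$ is monotone on $(0,1]$ and has an integrable (improper) singularity only at $0$, the Riemann sum converges to the improper integral; concretely one can bound the error by splitting off the first term $\frac{1}{k}\log\frac{1}{k} = -\frac{\log k}{k} \to 0$ and using monotonicity of $\log$ on the remaining interval $[1/k, 1]$ to sandwich the tail sum between $\int_{1/k}^{1}\log x\,dx$ and a shifted copy of it. Finally, $\int_0^1 \log x\,dx = [x\log x - x]_0^1 = -1$, so the whole expression tends to $-(-1) = 1$.

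The only mild subtlety — the "main obstacle," such as it is — is the integrable singularity of $\log x$ at $0$: a naive appeal to "Riemann sum $\to$ integral" needs a word of justification because $\log x$ is unbounded near $0$. This is handled cleanly by peeling off the $j=1$ term (which is $O(\log k / k)$ and vanishes) and using monotonicity of $\log$ to get explicit two-sided bounds on $\frac{1}{k}\sum_{j=2}^{k}\log\frac{j}{k}$ by $\int_{1/k}^{1}\log x\,dx$ and $\int_{0}^{1-1/k}\log x\,dx$, both of which converge to $\int_0^1 \log x\,dx = -1$. Everything else is a routine computation.
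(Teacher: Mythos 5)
Your proposal is correct, and in fact you give two routes. The first, via the logarithmic Stirling asymptotic $\log(k!) = k\log k - k + O(\log k)$, is essentially the paper's own argument: the authors quote the full form $k! \sim \sqrt{2\pi k}\,(k/e)^k$, and the extra factor contributes $\tfrac{1}{k}\log\sqrt{2\pi k} \to 0$, which is the same $O(\log k / k)$ error term you carry. The Riemann-sum route you develop in more detail is a genuinely different argument: it rewrites $\frac{1}{k}\log\prod_{j=1}^k \frac{k}{j} = -\frac{1}{k}\sum_{j=1}^k \log(j/k)$ and reduces the lemma to the elementary improper integral $\int_0^1 \log x\,dx = -1$, avoiding Stirling entirely. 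That buys you a more self-contained proof from first principles, at the price of having to address the endpoint singularity, and your instinct to peel off the $j=1$ term and use monotonicity is the right one. One slip in the details: the stated upper bound $\int_0^{1-1/k}\log x\,dx$ does \emph{not} dominate $\frac{1}{k}\sum_{j=2}^{k}\log(j/k)$ — already at $k=2$ the sum is $0$ while that integral is negative. The monotonicity bound that works is, for example, $\frac{1}{k}\sum_{j=2}^{k}\log(j/k) \le \int_{2/k}^{1}\log x\,dx$ (a left-endpoint Riemann sum of an increasing function underestimates the integral), which also tends to $-1$; or, even simpler, keep the full sum and sandwich it between $\int_0^1\log x\,dx = -1$ and $\int_{1/k}^1\log x\,dx \to -1$. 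Either fix preserves the conclusion, so this is a cosmetic error in the choice of comparison integral rather than a gap in the idea.
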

\begin{proof}
	For $k \to \infty$ by using the Stirling formula we have
	\begin{eqnarray*}
		\frac{1}{k}\log\bigg(\prod_{j=1}^{k}\frac{k}{j}\bigg)&=&\frac{1}{k}\log\bigg(\frac{k^k}{k!}\bigg)\sim \frac{1}{k}\log\bigg(\frac{k^k}{\sqrt{2 \pi k}\frac{k^k}{e^k}}\bigg)=  \\
		&=&	\frac{1}{k}\log\bigg(\frac{e^k}{\sqrt{2 \pi k}}\bigg)=\frac{1}{k}\bigg(\log(e^{k})-\log(\sqrt{2\pi k})\bigg) \\
		&=&1-\frac{1}{k}\log(\sqrt{2\pi k})\to 1.
	\end{eqnarray*}
\end{proof}

\begin{theorem}\label{ksunbias}
	The Hill estimatior results in asymptotically unbiased estimation for the tail index using the Kolmogorov-Smirnov $k$-selection technique as $k \to\infty$.
\end{theorem}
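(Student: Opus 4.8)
The plan is to read off an explicit expression for the Hill estimator at the Kolmogorov--Smirnov threshold and then identify its limit, the Lemma being exactly what lets one replace the arithmetic mean $\frac1k\sum_{j=1}^{k}\log(k/j)$ by $1$. First I would record the asymptotics of the upper order statistics implied by $F(x)=1-Cx^{-1/\xi}+o(x^{-1/\xi})$: writing $x_{n-j}=X^{*}_{n-j}$, the R\'enyi representation together with regular variation gives, for any threshold $k=k_n\to\infty$ with $k/n\to0$,
\[
\log x_{n-j}=\xi\Big(\log\tfrac{n}{j}+\log C\Big)+\varepsilon_{n,j},\qquad \sup_{1\le j\le k}|\varepsilon_{n,j}|\to0\ \text{in probability},
\]
the second-order term being swallowed into $\varepsilon_{n,j}$ precisely because $k/n\to0$. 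Summing over $j=1,\dots,k$, dividing by $k$, and using $\frac1k\sum_{j=1}^{k}\log(k/j)=\log k-\frac1k\log(k!)\to1$ (the Lemma), one obtains $\frac1k\sum_{j=1}^{k}\log x_{n-j}-\log x_{n-k+1}=\xi+o_P(1)$.

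Next I would bring in the fitted quantiles. From $q(j,k)=(k/j)^{\hat\xi}x_{n-k+1}$ one has $\log q(j,k)=\hat\xi\log(k/j)+\log x_{n-k+1}$, and averaging over $j=1,\dots,k$ with a second application of the Lemma yields the identity
\[
\hat\xi=\big(1+o(1)\big)\Big(\frac1k\sum_{j=1}^{k}\log q(j,k)-\log x_{n-k+1}\Big).
\]
The Kolmogorov--Smirnov rule selects $\hat k$ minimising $\sup_{j\le T}|x_{n-j}-q(j,k)|$ over $k$; the crux is that this forces $\log q(j,\hat k)=\log x_{n-j}+o_P(1)$ on the fitted range. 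Granting this, substitution into the identity together with the display from the first paragraph gives $\hat\xi(\hat k)=\xi+o_P(1)$, and passing to expectations by the usual uniform-integrability argument (the log-moments of the normalised order statistics being controlled) upgrades this to asymptotic unbiasedness.

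The step I expect to be the genuine obstacle is deducing $\log q(j,\hat k)=\log x_{n-j}+o_P(1)$ from minimality of the sup-distance: the criterion controls only an \emph{absolute} supremum over a window $j\le T$, whereas for $\xi>0$ the order statistics diverge and the meaningful quantity is a \emph{ratio}. The cleanest route is to note from the first-step asymptotics that $q(j,k)=x_{n-j}\,(k/j)^{\hat\xi(k)-\xi}\big(1+o_P(1)\big)$, so that the dominant $j=1$ term of the supremum cannot be made small unless $(\hat\xi(\hat k)-\xi)\log\hat k\to0$; since $\hat k\to\infty$ this already yields $\hat\xi(\hat k)\to\xi$, and the averaging identity with the Lemma then serves to make the statement uniform in $j$ and to pin the additive constant at $\xi$. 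Two secondary points need care: the mismatch between the finite fitting window $T$ and the full sum up to $k$ that the Lemma uses (handled by working in the regime $T\asymp k$ and bounding the neglected tail with the same order-statistic estimates), and the verification that the optimal threshold satisfies $\hat k/n\to0$, which is what makes the $o(x^{-1/\xi})$ remainder asymptotically negligible.
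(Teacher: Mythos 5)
Your route is genuinely different from the paper's, and in several respects more ambitious. The paper argues at the level of expectations: it writes $\mathbb{E}(\hat\xi)=\frac1k\log\bigl(\prod_{j}\mathbb{E}(X^*_{n-j+1})\bigr)-\log\bigl(\mathbb{E}(X^*_{n-k})\bigr)$ (a formal step that silently commutes $\mathbb{E}$ and $\log$), then replaces each $\mathbb{E}(X^*_{n-j+1})$ by the quantile proxy $q_n(j,k)=\bigl((k/j)\,X_{n-k+1}^{1/\xi}\bigr)^{\xi}$, after which the $X_{n-k+1}$ terms cancel exactly and one is left with $\xi\cdot\frac1k\log\prod_{j=1}^{k}\frac{k}{j}-\xi\log\frac{k}{k+1}$, handled by the Lemma. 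You instead work pathwise: R\'enyi representation plus regular variation to control $\log x_{n-j}$, two separate applications of the Lemma (one to the Hill sum and one to the $\log q$ sum), consistency in probability, and a final uniform-integrability upgrade. What is most striking is the role of the KS criterion. You spend most of your effort on what you call the genuine obstacle --- deducing from minimality of $\sup_{j\le T}|x_{n-j}-q(j,k)|$ that $(\hat\xi(\hat k)-\xi)\log\hat k\to 0$ --- whereas the paper's proof never uses minimality at all: after the one-line ``choose $k$ which minimizes the KS distance,'' the computation is identical to what one would write for \emph{any} admissible $k\to\infty$. So your chief worry is not a gap you share with the paper; it is extra rigour the paper never attempts, since the paper's theorem simply assumes $k\to\infty$ (as flagged in the remark immediately after the proof). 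Conversely, your in-probability/UI route avoids the paper's illegitimate $\mathbb{E}\log=\log\mathbb{E}$ step, which is the weakest link in the published argument. Net effect: the paper buys brevity by treating the statement as a formal cancellation, while you buy a correct interpretation of ``unbiased'' and a genuine engagement with the selection rule, at the cost of the open obstacle you correctly identify.
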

\begin{proof}
	Let $X_1,X_2, \dots, X_n$ be our sample, and be $X_1^{*},X_2^{*}, \dots, X_n^{*}$ the ordered sample. 
	Let
	\begin{equation}\label{eq:qfunc}
	q_{n}(j,k)=\bigg(\frac{k}{j}(X_{n-k+1})^{1/\xi}\bigg)^{\xi}
	\end{equation}
	be our unbiased quantile estimator for $\mathbb{E}(X_{n-j+1}^{*})$, similar as one can see \cite{kol}. For the right $n\to \infty$, $k(n)\to \infty$ sequences the distribution will be identical to a Pareto distribution (POT model). Therefore Glivenko-Cantelli type theorems for quantiles provides that the $q$ estimator is consistent, for every $\varepsilon > 0$ there is an $M$ such that if $n>M$ then the Kolmogorov-Smirnov distance will be less than $\varepsilon$.
	
	Choose $k$ which minimizes the Kolmogorov-Smirnov distance of $q$ estimator and $X_j^{*}$ samples and let $\hat{\xi}$ be the Hill estimation using $k$ if $n>M$.
	\begin{eqnarray*}
		\mathbb{E}\big(\hat{\xi})&=&\mathbb{E}\bigg(\frac{1}{k}\sum_{j=1}^{k}(\log(X_{n-j+1}^*)-\log(X_{n-k}^*))\bigg) \\
		&=&\frac{1}{k}\log\bigg(\prod_{j=1}^{k}\mathbb{E}(X_{n-j+1}^*)\bigg)-\log(\mathbb{E}(X_{n-k}^*)) \\
		&\sim&\frac{1}{k}\log\big(\prod_{j=1}^{k}q_n(j,k)\big)-\log(q_n(k+1,k)) \quad \text{approximation by using (\ref{eq:qfunc}}) \\
		&=& \frac{\xi}{k} \log\bigg(\prod_{j=1}^{k}\frac{k}{j}(X_{n-k+1}^{*})^{\frac{1}{\xi}}\bigg)-\xi\cdot \log\bigg(\frac{k}{k+1}(X_{n-k+1}^{*})^{\frac{1}{\xi}}\bigg) \\
		&=&\frac{\xi}{k}\bigg(\log(\prod_{j=1}^{k}\frac{k}{j})+\frac{1}{\xi}\sum_{j=1}^{k}\log(X_{n-k+1}^{*})\bigg)-\xi\bigg(\log(\frac{k}{k+1})+\log((X_{n-k+1}^{*})^{\frac{1}{\xi}})\bigg) \\
		&=&\frac{1}{k}\sum_{j=1}^{k}\log(X_{n-k+1}^{*})-\log(X_{n-k+1}^{*}) +\frac{\xi}{k}\bigg(\log(\prod_{j=1}^{k}\frac{k}{j})\bigg)-\xi\bigg(\log(\frac{k}{k+1})\bigg) \\
		&=& \xi \cdot \bigg(\frac{1}{k}\log(\prod_{j=1}^{k}\frac{k}{j})\bigg)-\xi \bigg(\log(\frac{k}{k+1})\bigg) \to \xi+0
	\end{eqnarray*} 	 
\end{proof}

However $k \to \infty$ is not an evident condition and even if it is realized we cannot say anything about the speed of convergence. This could lead to biased tail index estimation using Kolmogorov-Smirnov method, especially if $\xi$ has high value.

\section{Estimator for heavy-tailed distributions}

The method based on Kolmogorov-Smirnov metric is asymptotically unbiased. According to \cite{kol} the method provides acceptable results for small sample sizes if $\xi <0.5$,
but for tail index $\xi >0.5$ usually samples of more than $10\,000$ elements are necessary to estimate $\xi$ properly.
However, in most real-life applications such a large sample size is not available, therefore in cases of distributions with infinite variance the Kolmogorov-Smirnov metric-based estimator has significant bias.

Independent data in size $500$ from distributions with different tail index parameter $\xi$ were simulated to detect the magnitude of the bias. The estimations were calculated $30\,000$ times for each $\xi$. 
This experiment showed that the distribution of the estimator is not normal, but in spite of the different initial distributions for each $\xi$, we received similar empirical distributions for the estimators, which could be approximated by a generalized extreme value distribution, as one can see in Figure~\ref{fig:kol_smir_density}. Although the GEV fitting is the best among known distributions, the goodness of fit tests still reject it. Therefore it would be beneficial characterizing the bootstrap distribution more precisely. We estimated the parameters of the GEV distribution using maximum likelihood method, which is a consistent estimator, although sometimes requires high sample size for proper estimation.

However, based on the simulations, one can detect a linear relation between the parameters of the best fitting GEV distribution and the theoretical tail index parameter $\xi$ in the interval $ (0.5,4)$ (see Figure~\ref{fig:parameter_regression}). For $\xi <0.5$ the mean of the estimations using the Kolmogorov-Smirnov method was acceptable. For higher index values a linear transformation can be used on the parameters of the fitted GEV distribution to estimate the tail index. When $\xi >4$ the Kolmogorov-Smirnov method could not return reasonable results, so the correction could not help.

Alternatively, the average of estimations and the theoretical tail index parameter are also in linear connection according to simulations. 

After recalculating the mean values and GEV parameters for sample sizes of $2000$, $8000$ and $10\,000$ with $10\,000$, $2000$ and $1000$ simulations we experienced that the values were similar as one can see in Table~\ref{table:fitting}. 
Due to the long calculation time no simulations were run for higher $(\text{more than }10\,000)$ sample sizes, especially that the asymptotic behaviour of other methods are already realized for these sample sizes.
With these experiments we can state that in $\xi \in (0.5,4)$ the distribution of the Kolmogorov-Smirnov estimation does not depend on the sample size if $500<n<8000$. Deviation were detected  only on $\xi\geq 3$, $n=8000$ case. These results allow to use linear regression between the estimated GEV parameters and the theoretical tail index parameter.

\begin{table}[h]
\centering
\caption{The mean and the parameters of the fitted GEV distribution for Kolmogorov-Smirnov estimations}
\label{table:fitting}
\begin{tabular}{||c r|| c| c| c| c||}
\hline
\hline 
 && mean & location & scale & shape \\[0.1em]
 \hline
 \hline
 $\xi=0.5$,& n=500 & 0.448 &0.378 & 0.113 & 0.043 \\
 &n=2000 & 0.446 & 0.378 & 0.112 & 0.032 \\
 &n=8000 & 0.454 & 0.383& 0.116 & 0.045 \\
 &n=10000 &0.448 & 0.382&0.116 &-0.001 \\
 \hline
 $\xi=1$,& n=500 & 0.85 &0.701 & 0.274 & -0.03 \\
 &n=2000 & 0.855 & 0.709 & 0.274 & -0.039 \\
 &n=8000 & 0.859 & 0.724 & 0.27 & -0.078 \\
 &n=10000 &0.856 & 0.713&0.276 &-0.062 \\
 \hline
 $\xi=2$,& n=500 & 1.628 &1.334 & 0.605 & -0.094 \\
 &n=2000 & 1.634 & 1.356 & 0.601 & -0.115 \\
 &n=8000 & 1.617 & 1.336 & 0.602 & -0.108 \\
 &n=10000 &1.59 & 1.316&0.594 &-0.118 \\
 \hline
 $\xi=3$,& n=500 & 2.392 &1.978 & 0.935 & -0.141\\
 &n=2000 & 2.36 & 1.963 & 0.903&-0.137 \\
 &n=8000 &  2.408  & 2.055  & 0.911   & -0.22   \\
 &n=10000 & 2.441&2.09 &0.903 &-0.217 \\
 \hline
 $\xi=4$,& n=500 & 2.962 &2.478 & 1.159 & -0.177\\
  &n=2000  & 3.137 & 2.59 & 1.232&-0.128 \\
  &n=8000 &  3.149  & 2.637  & 1.225   & -0.163   \\
  &n=10000 & 3.09& 2.573 & 1.2 & -0.141 \\
  \hline\hline
 
\end{tabular}
\end{table}

\begin{figure}[h!]
	\centering
	\includegraphics[width=0.55\textwidth]{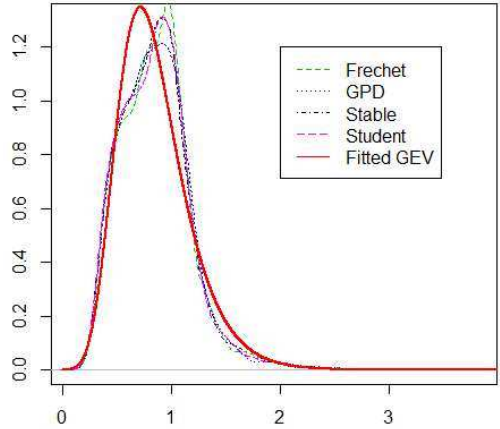} 
	\caption{Empirical density functions of Kolmogorov-Smirnov estimations for  tail index $\xi=1$ based on $1000$ size samples using $5000$ simulations and density function of the fitted GEV distribution (parameters: $\xi=-0.04$, $\mu=0.704$, $\sigma=0.273$) using all of the data                                                                                                                              
	}
	\label{fig:kol_smir_density}
\end{figure}

\begin{figure}[h!]
	\centering
	\includegraphics[width=0.6\textwidth]{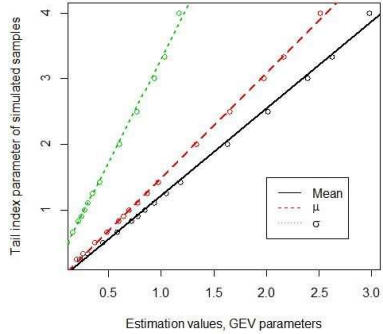} 
	\caption{Linear regression for mean and EVD parameters. Average Kolmogorov-Smirnov estimations were calculated using $500$ size Frechet samples and $30000$ simulations}
	\label{fig:parameter_regression}
\end{figure}

Based on the observations above we constructed an algorithm that can estimate the tail index on the interval $\xi \in (0.5,4)$  for 
sample sizes between $500$ and $10\,000$. 
The connection stands for even smaller samples, however the proper behavior of the bootstrap simulations require this sample size. For smaller sample the information about the extremes is minimal and even using bootstrap resampling results in estimates with high variance. The sample size has no significant effect on the estimation, when modeling is done separately for the samples. If one constructs a model with the mean of the Hill estimators, the location parameter of the fitted GEV and the sample size, then the size becomes marginally significant with a negligible coefficient ($p=0.017$, coef$=-0.0000046$) which has noticeable effect only over $100\,000$ observations.

\subsection{Algorithm}\label{regest}

Let $X_1,X_2,\dots,X_n$ be independent and  identically distributed observations.

\begin{enumerate}
\item As we investigate the extremal behaviour of the data, we apply the $m$ out of $n$ bootstrap method to resample from the observations. 
According to \cite{mout1}, see also \cite{moutn} in case of extreme-value inference  it is important to set $m$ as $m \to \infty \text{ and } m/n \to 0$ as $n \to \infty$. 
One way to ensure this property is if we chose $m=n^{\epsilon}$, where $\epsilon \in (0.5,1)$ can be arbitrary. 

\item Calculate $\xi_1,\xi_2,\dots,\xi_M$ tail index estimations from the $M$ bootstrap samples using the Kolmogorov-Smirnov method (\ref{ksdist}) for Hill estimator, where the bootstrap sample size is $m$. Let $\xi^*=(\sum_{j=1}^{M}{\xi_j})/{M}$ be the average of Kolmogorov-Smirnov estimations.

\item  Fit a generalized extreme value distribution to the estimated values. Our experience is that the tail index of the initial distribution is in linear relation with all of the extreme value parameters. 

The covariance matrix of the best fitted GEV distribution parameters has the eigenvalues $0.666, 0.0005, 0.00006$, which means that there is only one significant factor. The eigenvector of the largest eigenvalue is $(0.89,0.45,-0.09)$, so the location and scale parameters have higher impact. If we apply a linear regression with these parameters only, the location parameter has significant coefficient $(p=0.0003)$. Therefore we suggest to use the location parameter $\mu$ for estimation.

\item Let $\hat{\xi}=-0.119+1.603\cdot \mu$. These values were calculated from the  linear regression of the previous step.
\item Alternatively, we could have used the estimator $\hat{\xi} ^*=-0.1181+1.3301\cdot {\xi} ^*$, where ${\xi} ^*$ is the mean of the bootstrap samples.
\end{enumerate}

\subsection{Simulations}
To examine the properties of the fitting regression estimator, we simulated data from Pareto, Frechet, Student and symmetric stable distributions with the same theoretical tail index parameter for samples of size $500$, $1000$ and $4000$. We compared the average of the calculated values by using our new regression method to the theoretical parameter between $\xi=0.2$ and $\xi=5$. Subsequently we calculated the average absolute error from the theoretical $\xi$. For the different sample sizes we set the bootstrap subsample size to $60$, $100$ or $300$, respectively. We set a suitable KS threshold $(15,30,120)$ in each case, however our previous experience shows that if the KS threshold is higher than $5\%$ of the bootstrap sample, its significance is negligible because the largest deviation is observed for the highest quantiles.

One may conclude by Table \ref{table:estim_500}, \ref{table:estim_1000} and \ref{table:estim_4000} that in the interval $\xi \in (0.5,4)$ the estimates are near to the theoretical value. We can see that the mean of error for smaller tail indices reduces by using higher sample sizes, while the estimator gets biased for larger tail index. For $n=500$ the estimation is still acceptable for $\xi=5$. The only exception is the symmetric stable case for $\xi=0.5$, but this is practically the normal distribution, which is not in the maximum domain of attraction of the Frechet distribution, so we did not expect this extreme value model to work well.

\begin{table}[h]
\centering
\caption{Average tail index estimations from $500$ size samples, by using $500$ simulations, while the bootstrap sample size is $m =60$. The KS threshold is set to $15$. For every distribution we calculated the absolute error from the theoretical parameter $\xi$ }
\label{table:estim_500}
\begin{tabular}{||c|| c| c| c| c| c| c|c| c||}
\hline
\hline 
$\xi$ & 0.2 &0.33 & 0.5 & 1 & 2 & 3 &4& 5 \\[0.1em]
 \hline
 \hline
 GPD & 0.44 &0.52 & 0.63 & 1.02 &1.93& 2.86 &3.83&4.77 \\
 error & 0.24 & 0.19 & 0.13 & 0.11&0.22 & 0.33 &0.43& 0.52 \\
 \hline
 Frechet & 0.13& 0.29 & 0.48 & 0.99 & 1.96&2.92&3.89& 4.8 \\
 error & 0.07 & 0.05 & 0.05& 0.1 & 0.22& 0.33&0.45& 0.53 \\
 \hline
 Student & 0.41&0.48& 0.58&0.98&1.87&2.79&3.77&4.72\\
 error & 0.21&0.15&0.09&0.1& 0.23&0.35&0.47&0.59\\
 \hline
 Stable & & &0.32&0.99&1.98&2.96&3.94&4.91\\
 error& & & 0.18&0.11&0.22&0.33&0.45&0.54 \\
 \hline\hline
\end{tabular}
\end{table}

\begin{table}
\centering
\caption{Average tail index estimations from $1000$ size samples, by using $200$ simulations, while the bootstrap sample size is $100$. The KS threshold is set to $30$. For every distribution we calculated the absolute error from the theoretical parameter $\xi$}
\label{table:estim_1000}
\begin{tabular}{||c|| c| c| c| c| c| c| c|c||}
\hline
\hline
$\xi$ & 0.2 &0.33 & 0.5 & 1 & 2 & 3 &4& 5 \\
 \hline
 \hline
 GPD & 0.41 &0.49 & 0.6 & 1.01 &1.95& 2.92 &3.91&4.84 \\
 error & 0.21 & 0.16 & 0.1 & 0.09&0.18 & 0.27&0.36 & 0.41 \\
 \hline
 Frechet & 0.14& 0.3 & 0.48 & 0.99 & 1.97&2.95&3.93& 4.74 \\
 error & 0.06 & 0.04 & 0.05& 0.09 & 0.19& 0.29&0.4 &0.47 \\
 \hline
 Student & 0.37&0.45& 0.55&0.97&1.91&2.89&3.81&4.83\\
 error & 0.17&0.12&0.06&0.09& 0.21&0.29&0.41&0.47\\
 \hline
 Stable & & &0.28&0.97&1.97&2.95&3.93&4.88\\
 error& & & 0.22&0.09&0.2&0.29&0.39&0.47 \\
 \hline
 \hline
\end{tabular}

\end{table}

\begin{table}
\centering
\caption{Average tail index estimations from $4000$ size samples, by using $200$ simulations, while the bootstrap sample size is $300$. The KS threshold is set to $120$. For every distribution we calculated the absolute error from the theoretical parameter $\xi$}
\label{table:estim_4000}
\begin{tabular}{||c|| c| c| c| c| c| c|c| c||}
\hline
\hline
$\xi$ & 0.2 &0.33 & 0.5 & 1 & 2 & 3 &4& 5 \\
 \hline
 \hline
 GPD & 0.36 &0.43 & 0.55 & 0.99 &1.98& 2.98 &3.97&4.53 \\
 error & 0.16 & 0.1 & 0.06 & 0.09&0.18 & 0.26 &0.34 &0.52 \\
 \hline
 Frechet & 0.15& 0.3 & 0.48 & 1 & 2&3.02&3.96& 4.11 \\
 error & 0.05 & 0.03 & 0.04& 0.08 & 0.17& 0.25&0.29 &0.9 \\
 \hline
 Student & 0.31&0.39& 0.51&0.99&1.97&2.96&3.97&4.85\\
 error & 0.11&0.05&0.03&0.09& 0.19&0.27&0.33&0.39\\
 \hline
 Stable & & &0.22&0.99&1.99&2.98&3.91&4.7\\
 error& & & 0.28&0.08&0.16&0.26&0.31&0.42 \\
 \hline
 \hline
\end{tabular}
\end{table}

\subsection{Comparison of the methods}

As we mentioned the Kolmogorov-Smirnov method (\ref{ksdist}) was developed to estimate tail index parameter in $\xi \in (0,0.5)$ interval. The double bootstrap (\ref{doubleboot}) method can be used for estimating larger values, but it has long computational time. However our new algorithm works well for larger values than the Kolmogorov-Smirnov method  and it is faster than the double  bootstrap. In this section we compare the methods for different $\xi$ by using samples of size $1000$, and present the results in Table \ref{table:comp_frechet},\ref{table:comp_student}, \ref{table:comp_stable} and \ref{table:comp_pareto}.

\begin{table}
\centering
\caption{Comparing the methods for Frechet distribution using $200$ simulations from samples of size $1000$. In double bootstrap method we used $\epsilon=0.15$ for determining the bootstrap sample size ratio and simulated $500$ times in every bootstrap simulation. For fitting and mean regression the subsample size is $100$ and the KS threshold is set to $30$. For every method we calculated the mean absolute error from the theoretical parameter $\xi$}
\label{table:comp_frechet}
\begin{tabular}{||c|| c| c| c| c| c|c|c|c|| }
\hline
\hline
 $\xi$&0.2&0.33& 0.5 &1 &  2 &3& 4 & 5 \\
 \hline
 \hline
 Double bootstrap &0.21&0.35&0.53&1.06&2.11&3.17&4.23&5.29 \\
 error &0.02&0.04&0.05&0.1&0.21&0.31&0.41&0.52 \\
 \hline
Kolmogorov-Smirnov &0.18&0.3&0.44&0.82&1.61&2.35&2.57&1.65 \\
error &0.04&0.08&0.13&0.29&0.58&0.88&1.44&3.35\\
 \hline
Regression fitting &0.14&0.3&0.48&0.99&1.97&2.95&3.93&4.74 \\
error &0.06&0.04&0.05&0.09&0.19&0.29&0.4&0.46 \\

 \hline
 Mean regression &0.16&0.31&0.5&1.02&2&2.97&3.91&4.59 \\
 error &0.04&0.03&0.04&0.09&0.18&0.27&0.35&0.48 \\
 \hline
 \hline
\end{tabular}
\end{table}

\begin{table}
\centering
\caption{Comparing the methods for Student distribution using $200$ simulations from samples of size $1000$. In double bootstrap method we used $\epsilon=0.15$ for determining the bootstrap sample size ratio and simulated $500$ time in every bootstrap simulation. For fitting and mean regression the subsample size is $100$ and the KS threshold is set to $30$. For every method we calculated the absolute error from the theoretical parameter $\xi$}
\label{table:comp_student}
\begin{tabular}{||c|| c| c| c| c| c|c|c|c||| }
\hline
\hline
 $\xi$ &0.2&0.33& 0.5 &1 &  2 &3& 4 & 5 \\
 \hline
 \hline
 Double bootstrap &0.26&0.38&0.55&1.02&2.03&3.06&4.05&5.13 \\
 error &0.1&0.11&0.1&0.11&0.14&0.2&0.25&0.32 \\
 \hline
Kolmogorov-Smirnov &0.25&0.33&0.47&0.83&1.57&2.41&3&3.05 \\
error & 0.06&0.07&0.13&0.31&0.6&0.91&1.19&1.95\\
 \hline
Regression fitting &0.37&0.45&0.55&0.97&1.91&2.89&3.81&4.83 \\
error &0.17&0.11&0.06&0.09&0.21&0.29&0.41&0.47\\
 \hline
 Mean regression &0.36&0.45&0.57&1&1.96&2.97&3.88&4.81 \\
 error &0.16&0.12&0.08&0.09&0.19&0.27&0.37&0.43\\
 \hline
 \hline
\end{tabular}
\end{table}

\begin{table}
\centering
\caption{Comparing the methods for symmetric stable distribution using $200$ simulations from samples of size $1000$. In double bootstrap method we used $\epsilon=0.15$ for determining the bootstrap sample size ratio and simulated $500$ times in every bootstrap simulation. For fitting and mean regression the subsample size is $100$ and the KS threshold is set to $30$. For every method we calculated the mean absolute error from the theoretical parameter $\xi$}
\label{table:comp_stable}
\begin{tabular}{||c|| c| c| c| c|c| c|| }
\hline
\hline
$\xi$ & 0.5 &1 &  2&3 & 4 & 5 \\
 \hline\hline
 Double bootstrap & 0.12 &1.02  & 2.13&3.2 &4.29& 5.37 \\
 error & 0.38&0.11&0.27&0.4&0.49&0.62 \\
 \hline
Kolmogorov-Smirnov & 0.17&0.84& 1.6&2.36 &3.12&2.63 \\
error & 0.33&0.28&0.58& 0.9 &1.18&2.37\\
 \hline
 Regression fitting& 0.28&0.97&1.97& 2.95 &3.93&4.88 \\
error &0.22&0.09&0.2& 0.29 &0.39&0.47 \\
\hline
 Mean regression & 0.27&1.01&2.03& 3.02 &3.99&4.86 \\
 error &0.23&0.09&0.19& 0.28 &0.35&0.41 \\
 \hline\hline
\end{tabular}
\end{table}

\begin{table}
\centering
\caption{Comparing the methods for Pareto distribution using $200$ simulations from samples of size $1000$. In double bootstrap method we used $\epsilon=0.15$ for determining the bootstrap sample size ratio and simulated $500$ times in every bootstrap simulation. For fitting and mean regression the subsample size is $100$ and the KS threshold is set to $30$. For every method we calculated the mean absolute error from the theoretical parameter $\xi$}
\label{table:comp_pareto}
\begin{tabular}{||c|| c| c| c| c| c|c|c|c|| }
\hline\hline
$\xi$ & 0.2& 0.33 &0.5 &1 &  2& 3 & 4 & 5 \\
 \hline\hline
 Double bootstrap & 0.34& 0.44& 0.59 &1.07  & 2.05&3.06 &4.05& 5.04 \\
 error &0.7&0.15 & 0.15&0.12&0.14& 0.16&0.19&0.21 \\
 \hline
Kolmogorov-Smirnov &0.28 &0.36 & 0.48&0.85& 1.61& 2.36&2.77&2.28 \\
error &0.08&0.09 & 0.13&0.28&0.61&0.88&1.31&2.72\\
 \hline
 Regression fitting& 0.41&0.49& 0.6&1.01&1.95&2.92&3.91&4.84 \\
error &0.21&0.16&0.1&0.09&0.18&0.27&0.36&0.41 \\
\hline
 Mean regression &0.41&0.49& 0.62&1.04&1.99&2.95&3.9&4.7 \\
 error &0.21&0.16&0.12&0.09&0.17&0.25&0.33&0.41 \\
 \hline\hline
\end{tabular}
\end{table}

For every distribution we can see, that the double bootstrap and the regression estimators (\ref{regest}) have better properties: lower mean square error than the Kolmogorov-Smirnov method for $\xi >0.5$, but if $\xi <0.5$ the Kolmogorov-Smirnov method starts to perform better. We can conclude that the double bootstrap and regression methods have similar accuracy, but our new method has lower computational time and the estimations 
can usually be approximated by the normal distribution (see Figure \ref{fig:nortest}).

\begin{figure}[h!]
  \centering

        \includegraphics[width=0.6\textwidth]{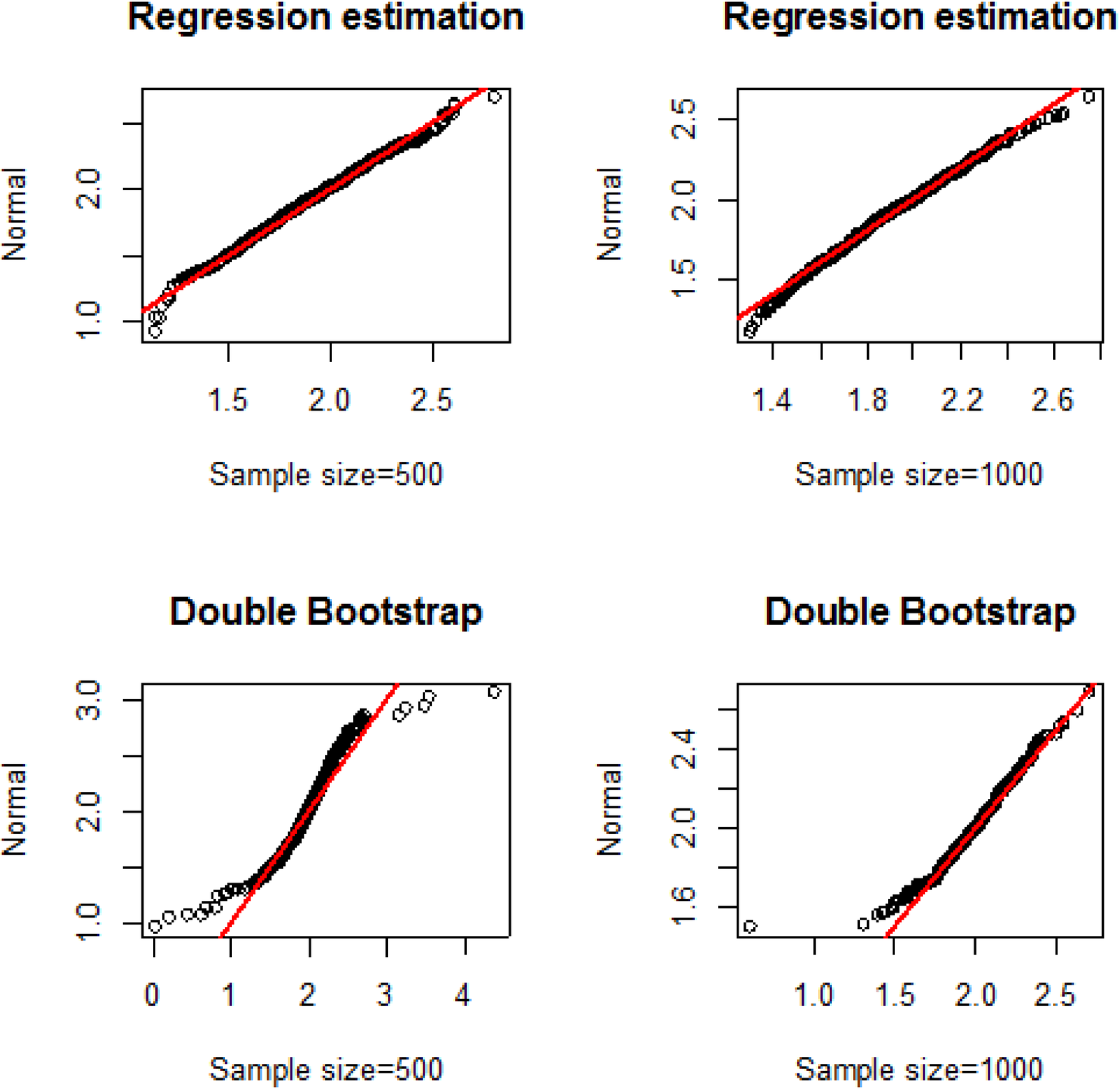} 
      
        \caption{QQ plots for case $\xi=2$ with regression estimation and double bootstrap to sample sizes of $500$ and $1000$. We have got $p$-values of $0.32$ and $0.05$  by the Anderson-Darling test for regression estimation, but for double bootstrap both $p$-values were less than $0.0001$.  }
    	\label{fig:nortest}
\end{figure}
    
In the $\xi \in (0.5,4)$ region the regression estimators have similar errors for the investigated distributions, 
which means it does not depend on the distribution of the sample, unlike the double bootstrap method.

The results of simulations in Table \ref{table:comp_frechet},\ref{table:comp_student}, \ref{table:comp_stable} and \ref{table:comp_pareto} let us provide a model selection method for estimating the tail index. One can choose the best method for the sample by following the next steps:

\begin{enumerate}
	\item Estimate the tail index using the Kolmogorov-Smirnov method (\ref{ksdist}). If the estimated $\hat{\xi}$ value is less than $0.5$, the Kolmogorov-Smirnov method is the best among the investigated aproaches, thus the estimation is acceptable.
	\item If $0.5<\hat{\xi}<0.85$, then it is likely that for the true tail index  $0.5<\xi<1$ holds, thus the regression estimation (\ref{regest}) gives the best estimation.
	\item If $0.85<\hat{\xi}$, then the behavior of the methods depends on the initial distribution. Fit GPD, Student, Stable and Frechet distributions for the sample.
	\item If the best fitting distribution is GPD or Student, use the double bootstrap method (\ref{doubleboot}).
	\item If the best fitting distribution is Stable or Student, use the regression method.
\end{enumerate}

\subsection{Applications to real data}

The Danish fire losses is a well known data set, which is suitable for testing extreme value models. Previous discussions were published e.g. by \cite{dan}, \cite{dan2} . The dataset contains $2167$ fire losses, which occurred between $1980$ and $1990$. For this amount of data we considered bootstrap subsample sizes in the interval $(50,300)$. We calculated the tail index of this dataset with $10\,000$ bootstrap samples for different subsample sizes. The results can be seen in Table \ref{table:danish_subsample}.

\begin{table}[h!]
\centering
\caption{Estimated tail index for Danish fire losses data with different subsample sizes}
\label{table:danish_subsample}
\begin{tabular}{||c| c| c||  }
\hline\hline
 Subsample size &Regression fitting method & Mean regression method  \\
  \hline\hline
50 & 0.689 &0.702 \\
 \hline
100 & 0.687 & 0.68\\
 \hline
 150&0.644 &0.646 \\
 \hline
200 & 0.604 & 0.621\\
\hline
300 & 0.555&0.598\\
\hline\hline

\end{tabular}

\end{table}

For comparison, we calculated the tail index by using other methods. To be more exact we fitted a generalized extreme value model for the maxima of $60$ losses and Pareto distribution over the threshold $u=30$. The chosen values satisfies the "not too high, not too low" rule, which is required for the extreme value modeling. Moreover, we estimated the tail index parameter with the Hill estimation by using double bootstrap and Kolmogorov-Smirnov methods. The results can be seen in Table \ref{table:danish_comp}. Beside that Figure \ref{fig:hillplot} presents the Hill estimation using different $k$ as threshold of ordered statistics.

\begin{figure}[h!]
	\centering
	
	\includegraphics[width=0.6\textwidth]{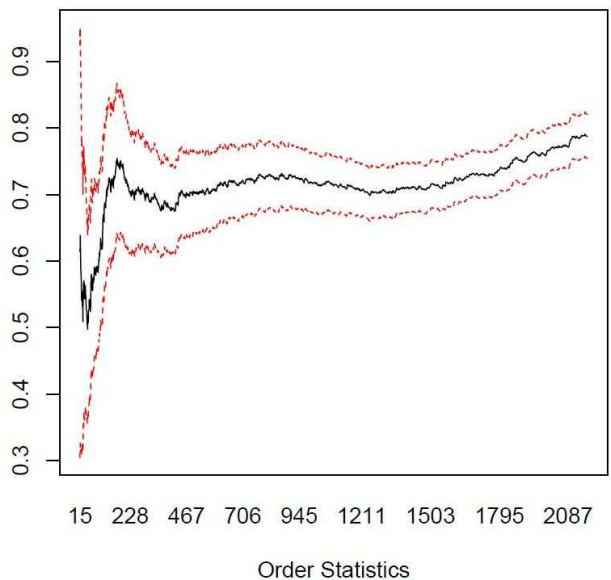} 
	
	\caption{Hill estimation for the tail index parameter of Danish fire losses dataset, using different $k$ values for estimation.}
	\label{fig:hillplot}
\end{figure}

\begin{table}[h]
\centering
\caption{Comparing methods for Danish fire losses data}
\label{table:danish_comp}
\begin{tabular}{||c| c||   }
\hline\hline
 Method & Tail index ($\xi$) \\
 \hline\hline
Fitted GEV & 0.507 \\
 \hline
Fitted GPD & 0.659\\
 \hline
Kolmogorov-Smirnov & 0.61\\
\hline
Double bootstrap & 0.707\\
\hline\hline

\end{tabular}
\end{table}

As one can see, our estimators are between the Kolmogorov-Smirnov and the double bootstrap methods. Using higher subsample size results in lower values. In theory the subsample ratio has to tend to $0$, therefore the estimations from smaller subsample sizes must be closer to the real parameter. 
One can see, that the estimation is near to the result of the double bootstrap. This observation corresponds to the fact that for $\xi>0.5$ the double bootstrap method is more accurate than the one based on the Kolmogorov-Smirnov distance.

With the result of regression fitting method using subsamples of size $150$, we estimated the quantiles of the generalized Pareto distribution with shape parameter $\xi = 0.644$. By setting the above value for the shape and $0$ as the location parameter, the maximum likelihood estimation for scale is $0.95$. 

Table \ref{table:danish_quantile} contains the calculated quantiles of the GPD$(0,0.95,0.644)$.
Moreover we calculated the empirical distribution function for the quantile values. One can see, that the high quantile estimators are close to the observed quantiles of the fire loss data.

 \begin{table}[h]
 \centering
 \caption{Quantiles of generalized Pareto(0,0.95,0.644), empirical distribution for Danish fire losses data at the estimated points, together with the empirical quantiles}
 \label{table:danish_quantile}
 \begin{tabular}{||c|c| c||   }
 \hline\hline
  Quantile & estimated Pareto (emp. dist.)& Empirical quantiles \\
  \hline\hline
 0.95 & 8.68 (0.943) &9.97\\
  \hline
 0.99 & 27.156 (0.99)&26.04\\
  \hline
 0.999 & 124.66 (0.9986)&131.55\\
 \hline\hline
 
 \end{tabular}
 \end{table}

\section{Conclusion}

Our new regression method (\ref{regest}) provides an opportunity to estimate the tail index for heavy-tailed distributions. We have shown its merits by the parameter estimation of known distributions, and presented that our method is also useful for real life data. The computation time is less than using the double bootstrap method. However as our algorithm also applies bootstrap techniques, one can use less bootstrap samples to lower the computation time further if needed -- at the expense of the estimations' accuracy, or conversely.

Our simulations showed in consort of \cite{kol} that the best estimation is based on the Kolmogorov-Smirnov distance (\ref{ksdist}), if the $\xi$ parameter is less than $0.5$. For $0.5 < \xi < 1$ the regression estimation can result the best estimation. However, for higher tail index the methods have a slight dependence of the initial distribution. If the distribution is close to GPD or Student, then the double bootstrap method (\ref{doubleboot}) could result more accurate estimation. In contrast, if the distribution is stable or Frechet we recommend to use the regression estimation. Therefore a fast Hill estimation using the Kolmogorov-Smirnov method and fitting distributions for the sample can help to choose the best method for estimating the tail index. This model selection algorithm could be extended by comparing more methods or by more fitted distributions.

Our regression estimation has two types, using the mean of the bootstrap samples or using the location parameter of the best fitted GEV distribution. Our experiments did not indicate which one is the more precise, therefore we suggest using both in a real life analysis. Further calculations could find the answer for this question.

\section{Acknowledgment}  
       
       The project was supported by the European Union, co-financed by the European Social Fund (EFOP-3.6.3-VEKOP-16-2017-00002).

\end{document}